\newtheorem{Theorem}{Theorem}
\begin{document}
\title{Low Complexity Weight Flexible Decoding Schemes of Linear Block Code for 6G xURLLC}

\author{Di Zhang,~\IEEEmembership{Senior Member,~IEEE,}
        Yinglei Yang,
        Zilong Liu,~\IEEEmembership{Senior Member,~IEEE,}
        Shaobo Jia,
        Kyungchun Lee,~\IEEEmembership{Senior Member,~IEEE,}
        and~Zhirong Zhang% <-this % stops a space
%\thanks{This work was supported  in part by the National Natural Science Foundation of China (NSFC) under Grant U22A2001, 62301502, in part by the Henan Natural Science Foundation for Excellent Young Scholar under Grant 242300421169 and 252300421224, in part by the UK Engineering and Physical Sciences Research Council under Grants EP/X035352/1 ('DRIVE') and EP/Y000986/1 ('SORT'), by the Royal Society under Grants IEC\R3\223079 and IES\R1\241212, and by the British Council under Grant UKIERI-SPARC/01/22, and in part by the Basic Science Research Program through the National Research Foundation of Korea (NRF) funded by the Ministry of Education (RS-2019-NR040071). (Corresponding author: Shaobo Jia.) }

\thanks{Di Zhang is with the School of Intelligent Systems Engineering, Sun Yat-sen University, Shenzhen 518107, China (E-mail: zhangd263@mail.sysu.edu.cn).}

\thanks{Yinglei Yang and Shaobo Jia are with the School of Electrical and Information Engineering, Zhengzhou University, Zhengzhou 450001, China (E-mail: yingleiyang@gs.zzu.edu.cn, ieshaobojia@zzu.edu.cn).}

% <-this % stops a space
\thanks{Zilong Liu is with the School of Computer Science and Electronics Engineering, University of Essex, Colchester CO4 3SQ, U.K. (E-mail: zilong.liu@essex.ac.uk).}
\thanks{Kyungchun Lee is with the Department of Electrical and Information Engineering and the Research Center for Electrical and Information Technology, Seoul National University of Science and Technology, Seoul  01811, Republic of Korea (E-mail: kclee@seoultech.ac.kr).}
\thanks{Zhirong Zhang is with the China Telecom Research Institute of Mobile and Terminal Technology, Beijing 102209, China (E-mail: zhangzhr@chinatelecom.cn).}
%\thanks{Manuscript received April 19, 2005; revised August 26, 2015.}
}

\markboth{
% Journal of \LaTeX\ Class Files,~Vol.~XX, No.~XX, XX~20XX
}
{Shell \MakeLowercase{\textit{et al.}}: Bare Demo of IEEEtran.cls for IEEE Journals}
\maketitle

% \begin{@twocolumnfalse}%
% \maketitle
% \end{@twocolumnfalse}

\begin{abstract}
Low complexity error correction code is a key enabler for next generation ultra-reliable low-latency communications (xURLLC) in six generation (6G). Against this background, this  paper proposes a  decoding scheme for linear block code by leveraging certain interesting properties of  dual codewords. It is found  that dual codewords with flexible weights can provide useful decoding information for the locations and magnitudes of error bits, which yielding higher reliability performance. In addition, two decoding schemes are proposed, in which one directly utilizes intrinsic information for iterative decoding, and the other combines prior channel information with intrinsic information for decoding. Both schemes are implemented using vector multiplication and real-number comparisons, making them easy to implement in hardware. Simulation results demonstrate the validness of our study. 
\end{abstract}

\begin{IEEEkeywords}
Low complexity decoding algorithm, iterative decoding, linear block code, xURLLC, 6G. 
\end{IEEEkeywords}
\IEEEpeerreviewmaketitle

\section{Introduction}

A remarkable transition from fifth generation (5G) to six generation (6G) is the expansion of ultra-reliablie and low latency communications (URLLC) into next generation URLLC (xURLLC)\cite{Lulu}. Use cases motivating this shift include a broad range of mission-critical applications that were beyond 5G’s reach. For example, in industrial automation and intelligent transportation applications, even a single packet loss can have serious consequences, which cannot be accomplished by 5G's URLLC technologies. Closing the gap (e.g., system sum-rate and quality-of-service\cite{AsifRIS,ZainAliNOMA}) between 5G URLLC and 6G’s diverse xURLLC targets thus calls for novel physical-layer solutions. Short codes will be important as machine-type communication (MTC) will become a major driver of 6G traffic \cite{6Gmiao}.  To meet the reliability error rate targets, powerful error-correcting codes that can deliver near-zero error rates even at short block lengths is an ideal choice\cite{Jingjing}. This necessity naturally points to linear block code, which combine strong error correction capability with efficient implementation. While linear coding boosts reliability at the transmitter side, it is only half of the equation for next generation reliable communication. The other half lies in decoding, the receiver’s ability to correct errors within extremely tight time and limited computational ability\cite{Maxiao,YangPLDPC}. 

In literature, decoding algorithms for linear block code, such as maximum likelihood decoding (MLD) and minimum distance decoding (MDD), exhibit distinct advantages and limitations under various application scenarios\cite{richardson2008modern}. With the rediscovery of low-density parity-check (LDPC) codes\cite{gallager1962low}, iterative decoding algorithms with the aid of belief propagation (BP)\cite{BP_VTC}, have attracted many research attentions. Although BP leads to reduced multiplication operations\cite{Zhu_TWC}, the introduction of $\tanh$ and $\tanh^{-1}$ functions increases computational complexity. To circumvent  this, minimum sum decoding (MSD) algorithm was proposed in \cite{fossorier1999reduced}  by replacing  the $\tanh$ and $\tanh^{-1}$ functions with straightforward sign evaluations and numerical comparisons. However, the complexity are under practical constraints, such as finite processing delays and high spectral efficiency, continuing to pose a significant challenge, especially in xURLLC  scenarios \cite{9306872, Jia_TWC, Mingxiao}. To reduce the decoding complexity, Bossert introduced a method in \cite{bossert1986hard} that uses the minimum weight codewords of dual codes. Such a decoding scheme was further extended for Bose–Chaudhuri–Hocquenghem (BCH) codes in \cite{bossert2022hard}.  New shift-sum decoding methods for non-binary cyclic codes were proposed in \cite{yuan2021plausibility,xing2023shift} by exploiting the statistical distribution of frequency matrix\footnote{The frequency matrix counts the occurrence frequency of the coefficients of the syndrome polynomials at different positions in the vector.}. In addition to that, artificial intelligence (AI) was recently introduced to linear block code design for high performance 6G with affordable complexity\cite{ai_tvt}. %Efficient decoding algorithms were also developed for RS and non-binary BCH (NB-BCH) codes that are easy to implement in hardware, requiring only polynomial multiplication and numerical comparisons. 
\IEEEpubidadjcol

 However, the aforementioned research results are mostly based on the polynomial operations of cyclic codes. Besides, with the minimum weight codewords of dual codes, one may not be able to extract more reliability information due to the limited number of minimum weight dual codewords. To solve this issue, we propose a decoding method using weight-unconstrained dual codewords in this article. The main contributions are summarized as follows:
\begin{itemize}
\item We show that most dual codewords with diverse weights offer effective reliability information for evaluating the weights of error vectors during decoding. Specifically, the larger the weight of the error vector, the higher the corresponding reliability information value. The reliability requirement of 6G can thus be achieved via this method. %However, the expected reliability information value of the remaining dual codewords remains nearly constant under any error vector.

\item We propose two low complexity decoding algorithms named iterative error reduction decoding (IERD) and the pre-knowledge assisted decoding (PAD), which are highly efficient for hardware implementation. The IERD can precisely identify the least reliable bit during each iteration, and PAD further optimizes the decoding process by incorporating channel information, thus enabling a multi-bit parallel flipping mechanism. 

%\item We present a complete theoretical derivation of IERD algorithm’s performance and validate the Bit Error Rate (BER) performance of both proposed algorithms through simulations.
\end{itemize}

\section{Prerequisites}
In the $n$-dimensional vector space $\mathbb{F}_q^n$, if a non-empty subset $\boldsymbol{C} \subseteq \mathbb{F}_q^n$ forms an $\mathbb{F}_q$-linear subspace of $\mathbb{F}_q^n$, then $\boldsymbol{C}$ is referred to as a $q$-ary linear block code. Any vector $\mathbf{c}=\left(c_1, c_2, \cdots, c_n\right)$ within $\boldsymbol{C}$ is termed a codeword. Here, $n$ is known as the length of $\boldsymbol{C}$, $K=|\boldsymbol{C}|$ is the number of codewords, $k=\log_q K$ is the information content of $\boldsymbol{C}$, and $\frac{k}{n}$ is called the code rate of $\boldsymbol{C}$. For a codeword $\mathbf{c}$, its Hamming weight is defined as $\text{wt}(\mathbf{c}) = \sum_{i=1}^{n} c_i$.
% \begin{align}
% \text{wt}(\mathbf{c}) = \sum_{i=1}^{n} c_i
% \end{align}
For two vectors $\mathbf{c}^1, \mathbf{c}^2\in \mathbb{F}_q^n$, the Hamming distance between the codewords $\mathbf{c}^1$ and $\mathbf{c}^2$ is denoted by $d_H(\mathbf{c}^1, \mathbf{c}^2) = \sum_{i=1}^{n} |c^{1}_{i} - c^{2}_{i}|$. 
% \begin{align}
% d_H(\mathbf{c^1}, \mathbf{c^2}) = \sum_{i=1}^{n} |c^{1}_{i} - c^{2}_{i}|
% \end{align}

For a $q$-ary linear block code $\boldsymbol{C}$, its minimum Hamming distance is defined as
\begin{align}
d=d(\boldsymbol{C})=\min \left\{d(\mathbf{c}^1, \mathbf{c}^2): \mathbf{c}^1, \mathbf{c}^2 \in \boldsymbol{C}, \mathbf{c}^1 \neq \mathbf{c}^2\right\}.
\end{align}
 Let $n \in \mathbb{Z}^{+}$, the inner product of two vectors $\boldsymbol{x}=\left(x_1, \ldots, x_n\right)$ and $\boldsymbol{y}=\left(y_1, \ldots, y_n\right)$ in $\mathbb{F}_q^n$ is defined as $\langle\boldsymbol{x}, \boldsymbol{y}\rangle=\sum_{i=1}^n x_i y_i$.
% \begin{align}
% \langle\boldsymbol{x}, \boldsymbol{y}\rangle=\sum_{i=1}^n x_i y_i.
% \end{align}
Then, for a $q$-ary linear block code $\boldsymbol{C}$ of length $n$ over $\mathbb{F}_q$, its dual code is $\boldsymbol{C}^{\perp}=\left\{\boldsymbol{x} \in \mathbb{F}_q^n \mid \langle\boldsymbol{x},\boldsymbol{c}\rangle\equiv 0 \pmod{q}, \forall \boldsymbol{c} \in \boldsymbol{C}\right\}$. In this paper, we focus on linear block code in the binary case.
% \begin{align}
% \boldsymbol{C}^{\perp}=\left\{\boldsymbol{x} \in \mathbb{F}_q^n \mid \langle\boldsymbol{x},\boldsymbol{c}\rangle=0, \forall \boldsymbol{c} \in \boldsymbol{C}\right\}.
% \end{align}

\section{Decoding Based On The Dual Codewords}
 
 The significance of an error vector \(\boldsymbol{f}\) is typically assessed by its weight, defined by the number of non-zero elements, \(wt(\boldsymbol{f})\). For a codeword \(\boldsymbol{c} \in \boldsymbol{C}\) and its dual codeword \(\boldsymbol{v} \in \boldsymbol{C}^{\perp}\), it is known that their inner product satisfies \(\langle\boldsymbol{v}, \boldsymbol{c}\rangle \equiv 0 \pmod{2}\). For any error vector $\mathbf{f}$, in most cases, the inner product between the receive vector \(\boldsymbol{r} = \boldsymbol{c} + \boldsymbol{f}\) and the dual codeword \(\boldsymbol{v}\) results in \(\langle \boldsymbol{v}, \boldsymbol{r} \rangle = \langle \boldsymbol{v}, \boldsymbol{c} + \boldsymbol{f} \rangle = \langle \boldsymbol{v}, \boldsymbol{f} \rangle \equiv 1 \pmod{2}\). This relationship forms the basis for verifying whether the receive vector \(\boldsymbol{r}\) is from the transmit codeword. However, the reliability of this verification, as well as how to quantitatively measure this reliability, remains a critical question. In the following, we will derive and analyze this issue in detail.
 
 %The significance of an error vector \(\boldsymbol{f}\) is typically evaluated by its weight, which is defined as the number of non-zero elements \(wt(\boldsymbol{f})\). For a codeword \(\boldsymbol{c} \in \boldsymbol{c}\) and its dual code \(\boldsymbol{v} \in \boldsymbol{C}^{\perp}\), it is known that their inner product satisfies \(\langle \boldsymbol{v}, \boldsymbol{c} \rangle = 0\). In most cases, the inner product between the received vector \(\boldsymbol{r} = \boldsymbol{c} + \boldsymbol{f}\) and the dual code \(\boldsymbol{v}\) equals \(\langle \boldsymbol{v}, \boldsymbol{r}\rangle = \langle \boldsymbol{v}, \boldsymbol{c} + \boldsymbol{f}\rangle = \langle \boldsymbol{v}, \boldsymbol{f} \rangle = 1\). This relationship provides a basis for verifying whether the received vector \(\boldsymbol{r}\) corresponds to the transmitted codeword. However, the reliability of this verification, as well as the quantitative measure of this reliability, remains an important question. In the following, we will derive and analyze this issue in detail. 
 
 Define the weight of the error vector \(\boldsymbol{f}\) as \(\tau\), and consider the case where the odd number of erroneous positions in \(\boldsymbol{f}\) overlaps with the \(\delta\) non-zero positions of the dual codeword \(\boldsymbol{v}\). In this situation, \(\langle\boldsymbol{v}, \boldsymbol{f}\rangle \equiv 1 \pmod{2}\). 
 We first consider the probability that \(k\) non-zero positions of error vector \(\boldsymbol{f}\) partially overlap with \(\delta\) non-zero positions of \(\boldsymbol{v}\), which can be expressed as the product of the probability that \(\delta\) non-zero positions overlap with \(k\) error positions and the probability that \(N-\delta\) zero positions overlap with \(\tau-k\) non-zero positions. Specifically, it becomes  
\begin{align}
Pr(k,\tau,\delta) &= Pr_{k}(k,\tau,\delta) \cdot Pr_{\tau-k}(k,\tau,\delta) \notag\\
&=  \frac{\binom{\tau}{k}\cdot\frac{\delta!}{(\delta-k)!} \cdot \frac{(N-\delta)!}{(N-\delta-\tau+k)!}}{\frac{N!}{(N-\tau)!}}.
\end{align}

Then, for all \(\boldsymbol{v} \in C^{\perp}\) with \(wt(\boldsymbol{v}) = \delta\) and \(wt(\boldsymbol{f}) = \tau\), the expected probability that \(\langle\boldsymbol{v}, \boldsymbol{f}\rangle \equiv 1 \pmod{2}\) is

\begin{align}
\label{eq:W}
W(\delta,\tau) &= \sum_{\substack{k=1 \\ k \text{ odd}}}^{\tau} \frac{\binom{\tau}{k}\cdot\frac{ \delta!}{(\delta-k)!} \cdot \frac{(N-\delta)!}{(N-\delta-\tau+k)!}}{\frac{N!}{(N-\tau)!}} 
% &= \sum_{\substack{k=1 \\ k \text{ odd}}}^{\tau} \binom{\tau}{k} \frac{\delta! \cdot (N-\delta)! \cdot (N-\tau)!} {(\delta-k)! \cdot (N-\delta-\tau+k)! \cdot N!} \notag\\
% &= \sum_{\substack{k=1 \\ k \text{ odd}}}^{\tau} \binom{\tau}{k} \binom{N-\tau}{\delta-k} \frac{\delta! \cdot (N-\delta)!}{N!} \notag\\
= \sum_{\substack{k=1 \\ k \text{ odd}}}^{\tau} \frac{\binom{\tau}{k} \binom{N-\tau}{\delta-k}}{\binom{N}{\delta}}.
\end{align}

In the following theorem, we identify the specific condition that \( W(\delta, \tau) \) increases monotonically with respect to \( \tau \).

\begin{Theorem}
\label{thm1}
    For code length $ N $, the expected probability $ W(\delta, \tau) $ is a function defined on the parameters $ \delta $ and $ \tau $. If $2\tau +2 +(\sqrt{\tau}+1)(\delta-3) \leq N$, then $ W(\delta, \tau) $ is monotonically increasing with respect to $ \tau $.
\end{Theorem}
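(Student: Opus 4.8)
The plan is to reduce the monotonicity claim to the nonnegativity of a single finite difference, and then to control that difference through the ratio of consecutive terms of an alternating sum. First I would extract the parity in \eqref{eq:W}: writing the numerator as $\sum_{k\ \mathrm{odd}}\binom{\tau}{k}\binom{N-\tau}{\delta-k}=\tfrac12\big(\binom{N}{\delta}-\Phi(\tau)\big)$ with $\Phi(\tau)=\sum_{k}(-1)^k\binom{\tau}{k}\binom{N-\tau}{\delta-k}=[z^\delta](1-z)^\tau(1+z)^{N-\tau}$, where the last equality is Vandermonde-type coefficient extraction. This yields the closed form $W(\delta,\tau)=\tfrac12\big(1-\Phi(\tau)/\binom{N}{\delta}\big)$, so $W(\delta,\tau)$ increases in $\tau$ if and only if $\Phi(\tau)$ decreases in $\tau$.

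Next I would compute the forward difference of $\Phi$. Since $(1-z)^{\tau+1}(1+z)^{N-\tau-1}=(1-z)^\tau(1+z)^{N-\tau}\cdot\frac{1-z}{1+z}$ and $\frac{1-z}{1+z}-1=\frac{-2z}{1+z}$, coefficient extraction gives $\Phi(\tau+1)-\Phi(\tau)=-2\,[z^{\delta-1}](1-z)^\tau(1+z)^{N-1-\tau}=:-2B(\tau)$, hence $W(\delta,\tau+1)-W(\delta,\tau)=B(\tau)/\binom{N}{\delta}$. The theorem therefore reduces to proving $B(\tau)\ge 0$ under the hypothesis. Because the hypothesis forces $N>2\tau$, I can factor $(1-z)^\tau(1+z)^{N-1-\tau}=(1-z^2)^\tau(1+z)^{N-1-2\tau}$ and read off the alternating sum $B(\tau)=\sum_{j\ge0}(-1)^j\binom{\tau}{j}\binom{N-1-2\tau}{\delta-1-2j}$.

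To show this sum is nonnegative I would study the magnitudes $a_j=\binom{\tau}{j}\binom{N-1-2\tau}{\delta-1-2j}\ge0$ through their ratio $\frac{a_{j+1}}{a_j}=\frac{\tau-j}{j+1}\cdot\frac{(\delta-1-2j)(\delta-2-2j)}{(N-2\tau-\delta+2j+1)(N-2\tau-\delta+2j+2)}$. The first factor decreases in $j$, and the second decreases as well (its numerator shrinks while its denominator grows), so $a_{j+1}/a_j$ is decreasing in $j$. It therefore suffices to verify $a_1\le a_0$, after which $a_j$ is non-increasing and the pairing $(a_0-a_1)+(a_2-a_3)+\cdots\ge0$ forces $B(\tau)\ge0$. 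The inequality $a_1\le a_0$ is exactly $(N-2\tau-\delta+1)(N-2\tau-\delta+2)\ge\tau(\delta-1)(\delta-2)$, and this is where I would deploy the hypothesis: linearizing this quadratic in $N-2\tau-\delta$ by a completing-the-square / AM--GM step into a bound of the form $N-2\tau-\delta\gtrsim\sqrt{\tau}\,\delta$, which is precisely the shape of $2\tau+(\sqrt{\tau}+1)\delta\lesssim N$ appearing in $2\tau+2+(\sqrt{\tau}+1)(\delta-3)\le N$.

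The main obstacle is this last linearization. Converting the sharp two-term quadratic balance $(N-2\tau-\delta+1)(N-2\tau-\delta+2)\ge\tau(\delta-1)(\delta-2)$ into a clean linear-in-$\sqrt{\tau}$ condition loses a lower-order amount, so the delicate point is to confirm that the particular constants in $2\tau+2+(\sqrt{\tau}+1)(\delta-3)$ are large enough to close the quadratic bound; I would expect the borderline regime (near equality in the hypothesis, and small $\delta$) to require separate verification, and would be prepared either to sharpen the constant or to dispatch small $\delta$ by hand. Everything upstream --- the parity extraction, the finite-difference identity, and the monotone-ratio reduction --- is routine once the closed form for $B(\tau)$ is established.
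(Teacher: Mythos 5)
Your route is essentially the paper's own: the paper also reduces monotonicity to the positivity of the difference $\Delta M=\sum_{k}(-1)^k\binom{\tau}{k}\binom{N-\tau-1}{\delta-1-k}$, factors the generating function as $(1-x^2)^\tau(1+x)^{N-2\tau-1}$ to get the same alternating sum $\sum_m(-1)^m\binom{\tau}{m}\binom{N-2\tau-1}{\delta-1-2m}$ that you call $B(\tau)$, and runs the same monotone-ratio argument pivoting on the first ratio. The only cosmetic difference is how you arrive at the difference (via the parity closed form $W=\tfrac12\bigl(1-\Phi(\tau)/\binom{N}{\delta}\bigr)$ and telescoping of $\Phi$, rather than differencing the odd-$k$ sum directly); that part of your write-up is correct and clean.

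The genuine gap is exactly the step you flagged and left open, and it cannot be closed: the stated hypothesis does not imply $a_1\le a_0$. Your pivotal inequality is $(N-2\tau-\delta+1)(N-2\tau-\delta+2)\ge\tau(\delta-1)(\delta-2)$, while the hypothesis only yields $N-2\tau-\delta+1\ge\sqrt{\tau}\,(\delta-3)$, i.e.\ control by $\tau(\delta-3)^2$, which falls short of $\tau(\delta-1)(\delta-2)$ by $\tau(3\delta-7)$ --- a deficit the leftover cross term $\sqrt{\tau}\,(\delta-3)$ cannot absorb. Concretely, $(N,\tau,\delta)=(16,4,5)$ satisfies the hypothesis with equality, $2\tau+2+(\sqrt{\tau}+1)(\delta-3)=16=N$, yet $a_0=\binom{7}{4}=35<a_1=4\binom{7}{2}=84$ and $B(\tau)=35-84+\binom{4}{2}=-43<0$; direct computation confirms $W(5,5)=2201/4368<W(5,4)=2244/4368$, so the claimed monotonicity itself fails at this boundary point. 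The paper's appendix does not escape this either: it evaluates $a_0/a_1$ with denominator $(\tau-1)(\delta-3)(\delta-4)$ --- the $m=1$ denominator --- in place of the correct $\tau(\delta-1)(\delta-2)$, and that slip is precisely what produces the $(\delta-3)^2$ bound and hence the stated constant. So your instinct to distrust the borderline regime was right, but the remedy is to strengthen the hypothesis rather than sharpen the estimate: requiring $N-2\tau-\delta+1\ge\sqrt{\tau(\delta-1)(\delta-2)}$, for instance $N\ge 2\tau+\delta-1+\sqrt{\tau}\,(\delta-1)$, makes your pairing argument go through verbatim.
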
 
\begin{proof}
    See Appendix A.
\end{proof}

Moreover, We can derive the following equation from (\ref{eq:W}).  
\begin{align}
\label{eq:WN}
W(N-\delta,\tau)
&= \sum_{\substack{k=1 \\ k \text{ odd}}}^{\tau} \frac{\binom{\tau}{k} \binom{N-\tau}{N-\delta-k}}{\binom{N}{N-\delta}}= \sum_{\substack{k=1 \\ k \text{ odd}}}^{\tau} \frac{\binom{\tau}{k} \binom{N-\tau}{\delta-(\tau-k)}}{\binom{N}{\delta}}\notag\\
% &=\begin{cases}
%  \frac{\binom{N}{\delta}-\sum_{\substack{k=1 \\ k \text{ odd}}}^{\tau}\binom{\tau}{k} \binom{N-\tau}{\delta-k}}{\binom{N}{\delta}} & \text{if } \tau \equiv 1 \pmod{2} \\
%   \frac{\sum_{\substack{k=1 \\ k \text{ odd}}}^{\tau}\binom{\tau}{k} \binom{N-\tau}{\delta-k}}{\binom{N}{\delta}} & \text{if } \tau \equiv 0 \pmod{2}
% \end{cases}\notag\\
&=\begin{cases}
 1 - W(\delta,\tau) & \text{if } \tau \equiv 1 \pmod{2}, \\
 W(\delta,\tau) & \text{if } \tau \equiv 0 \pmod{2}.
\end{cases}
\end{align}

\begin{figure*}[htbp]
    \centering
     % \hfill
    % 图片a
    \begin{subfigure}{0.32\textwidth}
        \includegraphics[width=\linewidth,height=0.17\textheight]{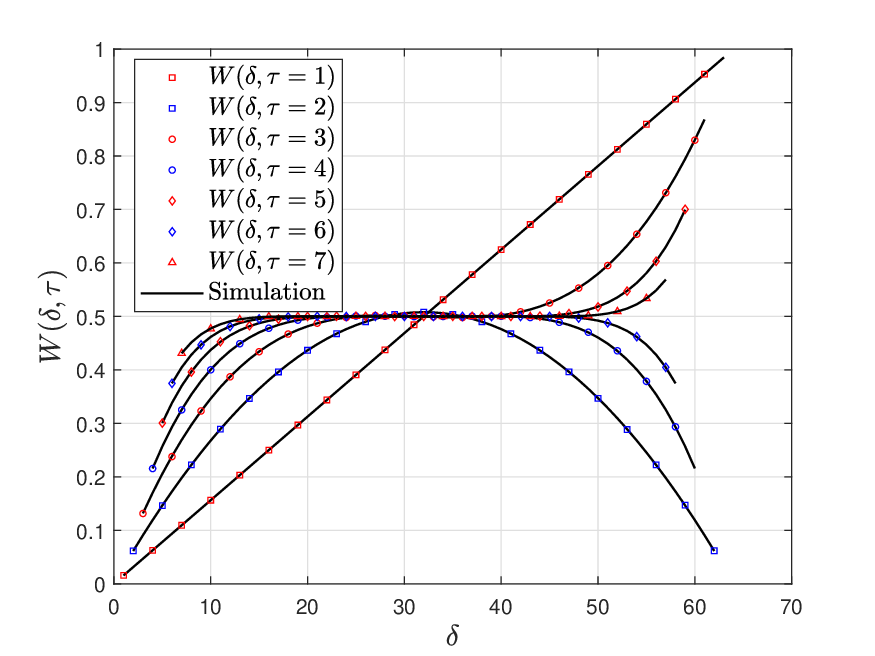}
        \caption{ Expected probability in linear block code $\mathbf{C}(64, 16)$.}
        \label{fig}
    \end{subfigure}
     \hfill
    % 图片b
    \begin{subfigure}{0.32\textwidth}
        \includegraphics[width=\linewidth,height=0.17\textheight]{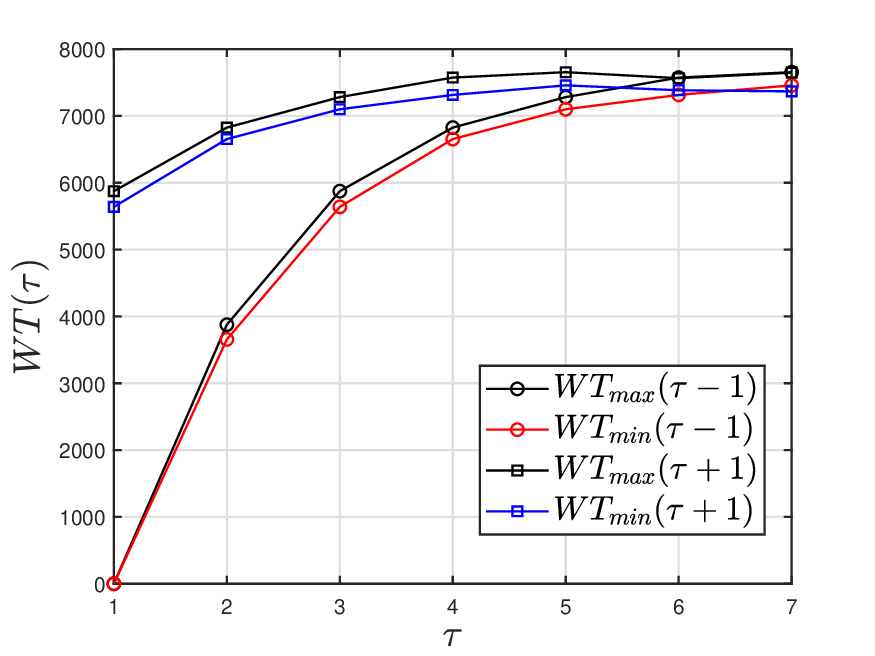}
        \caption{Intrinsic information in linear block code $\mathbf{C}(32, 16)$.}
        \label{fig2}
    \end{subfigure}
    % 图片c
    \hfill
    \begin{subfigure}{0.32\textwidth}
        \includegraphics[width=\linewidth,height=0.17\textheight]{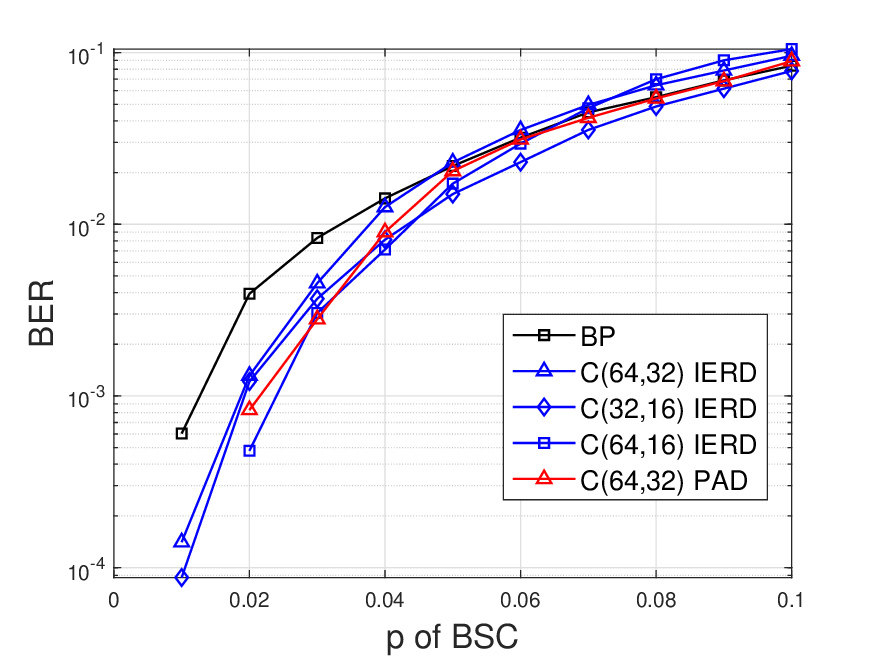}
        \caption{ BER performance of $\mathbf{C}(n,k)$-linear block code under BSC.}
        \label{fig3}
    \end{subfigure}
    % \begin{subfigure}{0.245\textwidth}
    %     \includegraphics[width=\linewidth,height=0.17\textheight]{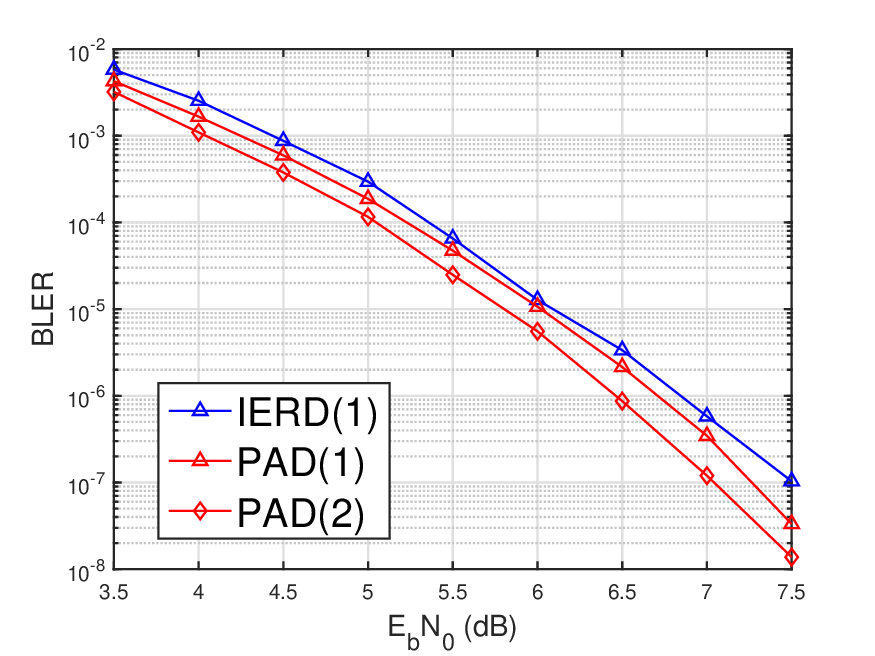}
    %     \caption{BLER performance of $\mathbf{C}(64,22)$-linear block code.}
    %     \label{fig11}
    % \end{subfigure}
    \caption{Decoding performances based on dual codewords.}
    \label{fig:main}
\end{figure*}

Error vectors with different weights yield different expected probabilities $W(\delta, \tau)$, as shown in Fig. \ref{fig:main}(\subref{fig}).  When $ wt(\boldsymbol{v}) < d_{A} \leq   \frac{N-2\tau+3\sqrt{\tau}+1}{\sqrt{\tau}+1}$, the higher the weight of $wt(\mathbf{f})$, the higher the expected probability $W(\delta, \tau)$. Moreover, $W(\delta, \tau)$ can  effectively distinguish the weight values of the error vectors, which can serve as a reliability metric for determining the magnitude of the error vector. Based on (\ref{eq:WN}), while calculating $W(\delta, \tau)$ for high-weight dual codewords using 
\begin{align}
% \label{eq:WN2}
W(\delta,\tau)=\begin{cases}
 1 - W(\delta,\tau) & \text{if } \tau \equiv 1 \pmod{2}, \\
 W(\delta,\tau) & \text{if } \tau \equiv 0 \pmod{2},
\end{cases}
\end{align}
we find it increasing with the increase weight $\tau$ of random errors. This result implies that most dual codewords (when $wt(\boldsymbol{v})<d_{A} $ or $ wt(\boldsymbol{v})>d_{B} \geq N-\frac{N-2\tau+3\sqrt{\tau}+1}{\sqrt{\tau}+1}$) can provide useful information for decoding. When $d_{A} < wt(\boldsymbol{v}) < d_{B}$, the expected probability $|W(\delta, \tau)-0.5| \leq 10^{-3}$ for all $ \tau \in \mathbb{Z}_{\geq 0}$. Therefore, the dual codewords in this range neither provide any useful information nor generate any detrimental information in the decoding process. With the analysis above, one can present the specific theory of the dual codewords decoding as follows.
%  \begin{figure}[!t]
% \centering
% \includegraphics[width=3.2in]{simulation-WT.eps} 
% \caption{ Expected probability in linear block code $\mathbf{C}(64,16)$.}
% \label{fig}
% \end{figure}
%  \begin{figure}[!t]
% \centering
% \includegraphics[width=3.2in]{WT_Information1.eps} 
% \caption{Intrinsic information in linear block code $\mathbf{C}(32,16)$.}
% \label{fig2}
% \end{figure}

Let \( \mathbf{C} \subseteq \mathbb{F}_2^n \) be a linear block code, and its dual code is defined as \( \mathbf{C}^{\perp} := \left\{\boldsymbol{v} \in \mathbb{F}_2^n \mid \forall_{\boldsymbol{c} \in \mathbf{C}}\langle\boldsymbol{v}, \boldsymbol{c}\rangle\equiv 0 \pmod{2}\right\} \). Within the dual code \( \mathbf{C}^{\perp} \), we define  set \( \mathbf{B} \) as \(  \left\{\boldsymbol{b} \in \mathbf{C}^{\perp} \mid w t(\boldsymbol{b})>d_{B}\geq  N-\frac{N-2\tau+3\sqrt{\tau}+1}{\sqrt{\tau}+1} \right\}=\left\{\boldsymbol{b}_1, \cdots, \boldsymbol{b}_s\right\} \subseteq \mathbf{C}^{\perp} \), where \( |\mathbf{B}| \) denotes the cardinality of  set \( \mathbf{B} \), and the elements of set \( \mathbf{B} \) are referred to as decoding vectors. Similarly, we define set \( \mathbf{A} \) as \( \left\{\boldsymbol{a} \in \mathbf{C}^{\perp} \mid w t(\boldsymbol{a})<d_{A}\leq \frac{N-2\tau+3\sqrt{\tau}+1}{\sqrt{\tau}+1}\right\}=\left\{\boldsymbol{a}_1, \cdots, \boldsymbol{a}_s\right\} \subseteq \mathbf{C}^{\perp} \), where \( |\mathbf{A}| \) denotes the number of elements in \( \mathbf{A} \), and the elements of set \( \mathbf{A} \) are also referred to as decoding vectors. 

Given the received codeword \(\boldsymbol{w} = \boldsymbol{c} + \boldsymbol{f}\), where \(\boldsymbol{c} \in \mathbf{C}\) and \(\boldsymbol{f}\) is the error vector, the intrinsic information \cite{bossert2022hard} \(WT\) is defined as \(WT = W T_A(\boldsymbol{w}) + W T_B(\boldsymbol{w})\). Here \(W T_A(\boldsymbol{w})\) represents the intrinsic information obtained from the elements of \(\mathbf{A}\). The calculation of \(W T_A(\boldsymbol{w})\) can be expressed as follows
\begin{align}
W T_A(\boldsymbol{w})
=\sum_{\boldsymbol{a} \in \mathbf{A}} W(wt(\boldsymbol{a}),wt(\boldsymbol{w}))
=\sum_{\boldsymbol{a} \in \mathbf{A}} \langle\boldsymbol{a}, \boldsymbol{w}\rangle \mod(2).
\end{align}
Here \(W T_B(\boldsymbol{w})\) represents the intrinsic information that can be obtained from the elements in \(\mathbf{B}\). Therefore, 
\begin{align}
     W T_B(\boldsymbol{w}) 
    % & = \begin{cases}
    %     |B|-\sum_{\boldsymbol{b} \in B} W(wt(\boldsymbol{b}),\boldsymbol{wt(f)}) & \text{if } w t(\boldsymbol{f})\equiv 1  \\
    %     \sum_{\boldsymbol{b} \in B} W(wt(\boldsymbol{b}),\boldsymbol{wt(f)}) & \text{if } w t(\boldsymbol{f}) \equiv 0 
    % \end{cases}\\
    =\begin{cases}
        |\mathbf{B}|-\sum_{\boldsymbol{b} \in \mathbf{B}} (\langle\boldsymbol{b}, \boldsymbol{w}\rangle) &  w t(\boldsymbol{f})\equiv 1 \pmod{2},\\
        \sum_{\boldsymbol{b} \in \mathbf{B}} (\langle\boldsymbol{b}, \boldsymbol{w}\rangle ) &  w t(\boldsymbol{f})\equiv 0 \pmod{2}.
    \end{cases}
\end{align}
% \(W T_A(\boldsymbol{w})\) represents the intrinsic information that can be obtained from the elements in set \(\mathbf{A}\). The calculation of \(W T_A(\boldsymbol{w})\) is as follows:
% \begin{align}
% W T_A(\boldsymbol{w})
% =\sum_{\boldsymbol{a} \in A} W(wt(\boldsymbol{a}),\boldsymbol{wt(f)})\\
% =\sum_{\boldsymbol{a} \in A} (\langle\boldsymbol{a}, \boldsymbol{w}\rangle \mod(2))
% \end{align}

Based on the definition of the dual code, we can deduce that \( W T_B(\boldsymbol{w})=W T_B(\boldsymbol{c}+\boldsymbol{f}) =W T_B(\boldsymbol{f}) \), and \( W T_A(\boldsymbol{w})=W T_A(\boldsymbol{c}+\boldsymbol{f}) =W T_A(\boldsymbol{f}) \). Moreover, both $W T_A(\boldsymbol{f})$ and $W T_B(\boldsymbol{f})$ are monotonically increasing with respect to $wt(\boldsymbol{f})$. Thus, for error vectors \(\boldsymbol{f}\) and \(\boldsymbol{h}\), if \(wt(\boldsymbol{f}) < wt(\boldsymbol{h})\), it can be concluded that \(W T(\boldsymbol{f}) < W T(\boldsymbol{h})\).  Next, we will use one example to further illustrate how intrinsic information is extracted and how it aids in decoding.

 Example 2 (Intrinsic information extraction and its role in decoding): Taking the linear block code $\mathbf{C}(32,16)$  as an example, we conducted simulation experiments using 5000 dual codes from sets A and B, respectively. In the experiments, random errors $\boldsymbol{f}$ with weight $\tau$ were simulated. For a given received codeword \(\boldsymbol{w} = \boldsymbol{c} + \boldsymbol{f}\), the intrinsic information is computed as  $WT_i = W T_B(\boldsymbol{w} + \mathbf{e}_i) + W T_A(\boldsymbol{w} + \mathbf{e}_i)$,   where \(\mathbf{e}_i\) denotes the vector whose \(i\)-th component is 1 and all other components are 0. As shown in  Fig. \ref{fig:main}(\subref{fig2}), the minimum of the intrinsic information $WT(wt(\boldsymbol{f})=\tau-1)$ obtained by flipping error positions is less than the minimum obtained by flipping correct positions. In this situation, error positions and correct positions can be distinctly identified (for example, $WT_{\min} = \min(WT)$ corresponds to the flipped error positions), thus implying a decoding philosophy based on unreliable positions.

%The comparison of simulation results is shown in Fig. 
% \ref{fig2}. When $\tau < 6$, the maximum of the intrinsic information $WT(\tau-1)$ obtained by flipping error positions is less than the minimum of the intrinsic information $WT(\tau+1)$ obtained by flipping correct positions. In this case, all error positions and correct positions can be distinctly identified. When $\tau > 6$, the minimum of the intrinsic information $WT(\tau-1)$ obtained by flipping error positions is less than the minimum obtained by flipping correct positions. In this situation, some error positions and correct positions can be distinctly identified (for example, $WT_{min}$ corresponds to the flipped error positions). 

% \begin{lemma}\label{lemma:invertible_matrix}
% For error vectors \(\boldsymbol{f}\) and \(\boldsymbol{h}\), if \(wt(\boldsymbol{f}) < wt(\boldsymbol{h})\), then in most cases it can be concluded that \(W T(\boldsymbol{f}) < W T(\boldsymbol{h})\).

% \end{lemma}

% Next, we will use one example to further show how intrinsic information aids in decoding.

% \begin{figure}[htbp]
% %\centerline{\includegraphics{simulation-WT1.jpg}}
% \centering
% \includegraphics[width=3.2in]{WT_Information.eps}
% \caption{Comparison of Intrinsic Information between Flipped Error Positions and Flipped Correct Positions in $C(64,32)$.}
% \label{fig2}
% \end{figure}
\section{Low Complexity Decoding Schemes with Hard Decision}
In this section, we introduce two low complexity decoding Schemes and analyze their performances. 
\subsection{The Proposed Decoding Schemes}
For all decoding schemes, the first step is to randomly generate the sets \( \mathbf{A} \) and \( \mathbf{B}\) of the relevant dual codewords.
% For all algorithms, the first step is to determine the sets \( A \) and \( B \) of the dual code according to the target code, which can be accomplished by Algorithm \ref{alg:AOA1}. 
At the receiver, we calculate the \( WT \) and sort the obtained \( WT \) values based on their reliabilities.
Since one may not know in advance the weight of the error vector \(\boldsymbol{f}\), we make the following adjustment to \(WT_B\):
\begin{align}
     W T_B(\boldsymbol{w})
    =\begin{cases}
        |B|-\sum_{\boldsymbol{b} \in \mathbf{B}} (\langle\boldsymbol{b}, \boldsymbol{w}\rangle) &  \sum_{\boldsymbol{b} \in \mathbf{B}} (\langle\boldsymbol{b}, \boldsymbol{w}\rangle)>\frac{|\mathbf{B}|}{2},\\
        \sum_{\boldsymbol{b} \in \mathbf{B}} (\langle\boldsymbol{b}, \boldsymbol{w}\rangle ) &  \sum_{\boldsymbol{b} \in \mathbf{B}} (\langle\boldsymbol{b}, \boldsymbol{w}\rangle)<\frac{|\mathbf{B}|}{2}.
    \end{cases}
\end{align}

% \begin{algorithm}
%     \caption{}
%     \label{alg:AOA1}
%     \renewcommand{\algorithmicrequire}{\textbf{Input:}}
%     \renewcommand{\algorithmicensure}{\textbf{Output:}}
%     \begin{algorithmic}[1]
%         \REQUIRE $H,N,k$,$maxIter$,$A= \emptyset,B = \emptyset$ %%input
%         \ENSURE c    %%output
%         \FOR{$i=1:maxIter$}
%             \STATE $v= mod(rand([0, 1], 1, N-k)*H,2)$      
            
%         \IF{$wt(v)> d_{B}$}
%             \State $B = B \cup {v}$
%         \ENDIF
%         \IF{$wt(v) < d_{A}$}
%             \State $A = A \cup {v}$
%         \ENDIF
%         \ENDFOR
%     \end{algorithmic}
% \end{algorithm}
1) IERD Scheme: Iterative error reduction decoding is a hard-decision decoding method, where in each iteration, the unreliable positions based on \( WT \) are flipped to compute a new received vector, as detailed in \textbf{Algorithm \ref{alg:AOA}}. In addition, $WT(\mathbf{f}+\mathbf{e}_i)=\min(WT)$ corresponds to the possibly largest $\mathbf{e}_i$ such that $wt(\mathbf{f}+\mathbf{e}_i)=\tau-1$. Based on this likelihood, we can calculate the most probable vector by reversing these positions to iteratively reduce the number of errors. When \( \min(WT) > 0 \), we take \( \mathbf{f} = \mathbf{f}+\mathbf{e}_i \) as the new error vector and continue to iterate. When \( \min(WT) = 0 \), this indicates that the correct transmit codeword has been found. IERD can identify the location of an error in each iteration, but it requires calculating \( WT \) every time.
\begin{algorithm}
    \caption{The IERD Algorithm}
    \label{alg:AOA}
    \renewcommand{\algorithmicrequire}{\textbf{Input:}}
    \renewcommand{\algorithmicensure}{\textbf{Output:}}
    \begin{algorithmic}[1]
        \REQUIRE Received vector $\mathbf{r}$, max iterations \( T_{max} \) %%input
        \ENSURE c    %%output
        \FOR{$k=1:T_{max}$}
            \FOR{$i=1:n$}
                \STATE $WT_i=W T_B(\mathbf{r}+\mathbf{e}_i)+W T_A(\mathbf{r}+\mathbf{e}_i)$
                
            \ENDFOR
        \STATE $W T_j=\min{\boldsymbol(WT)}$
        \IF {$W T_j ==0$}
            \STATE return $\boldsymbol{r} = \mathbf{r}+\mathbf{e}_j$;  
            \STATE  break;
        \ELSE
            \STATE $\boldsymbol{r} = \mathbf{r}+\mathbf{e}_j$
        \ENDIF
        \ENDFOR
    \end{algorithmic}
\end{algorithm}

2) PAD Scheme: The pre-knowledge assisted decoding scheme builds upon \textbf{Algorithm \ref{alg:AOA}}, which uses only the intrinsic information formed by the dual codewords, by incorporating prior channel information, such as the transition probability \( p \) in binary symmetric channel (BSC). We use $LR_i = (1-p)/p$ or $LR_i = p/(1-p)$ to represent the prior information when the received symbol \( r_i = 0 \) or \( r_i = 1 \), and we express the intrinsic information as \(  \frac{WT-\min(WT)}{\max(WT)-WT} \).  In the additive white gaussian noise (AWGN) channel, the prior information can be represented by the likelihood ratios
\begin{equation}
LR_i = \frac{p(c_i = 0 \mid r_i)}{p(c_i = 1 \mid r_i) }= \frac{1 + e^{-2r_i/\sigma^2}}{1 + e^{2r_i/\sigma^2}} =e^{-2r_i/\sigma^2},
\end{equation}
where \(\sigma^2\) is the noise variance. By combining the prior information with the intrinsic information, we obtain comprehensive information
\begin{equation}
E_i=\left\{\begin{array}{l}
 \frac{LR_i(\max(WT)-WT)}{WT-\min(WT)}, r_i>0,\\
 \frac{LR_i(WT-\min(WT))}{\max(WT)-WT}, r_i<=0,
\end{array}\right.
\label{eq:Ei} 
\end{equation}
which is used to determine whether to reverse these positions so as to decrease the overall error count: if \(E_i>1\), the bit is taken as \(r_i=0\). otherwise, it is taken as \(r_i=1\). This information is further used to update the prior information $LR_i = \alpha E_i$ for the next iterations, where \(\alpha\) is a scaling factor. Compared to IERD, PAD can significantly reduce the number of calculations of intrinsic information \( WT \), thereby reducing the decoding complexity.

% \begin{algorithm}
%     \caption{Pre-knowledge Assisted Decoder}
%     \label{alg:AOA2}
%     \renewcommand{\algorithmicrequire}{\textbf{Input:}}
%     \renewcommand{\algorithmicensure}{\textbf{Output:}}
%     \begin{algorithmic}[1]
%         \REQUIRE $r$,$p$ $\index_{\max}$%%input
%         \ENSURE c    %%output
%         \FOR{$k=1:\index_{\max}$}
%             \STATE $m_i=\left\{\begin{array}{l}\log \frac{p}{1-p}, r_i=1 \\ \log \frac{1-p}{p}, r_i=0\end{array}\right.$
%             \FOR{$i = 1:N$}
%                 \STATE $W T=W T_B(\boldsymbol{r+e_i})+W T_A(\boldsymbol{r+e_i})$

%             \ENDFOR    
%         \STATE $W T_j=$min${(WT)}$
%         \STATE $W T_j=$max${(WT)}$
%         \IF {$W T_j ==0$}
%             \STATE break
%             \STATE return $r = r+e_j$
%         \ELSE
%             \FOR{$i = 1:N$}
%             \STATE Calculation $E_i$
%                 \IF{$E_i>0$}
%                     \STATE $r_i=0$
%                 \ELSE
%                     \STATE $r_i=1$
%                 \ENDIF
%             \ENDFOR
%             \IF{$W T(\boldsymbol{r})==0$}
%                 \STATE break
%                 \STATE return $\boldsymbol{r}$
%             \ENDIF
%         \ENDIF
%         \ENDFOR
%     \end{algorithmic}
% \end{algorithm}
\begin{figure}[!t]
\centering
\includegraphics[width=3.0in]{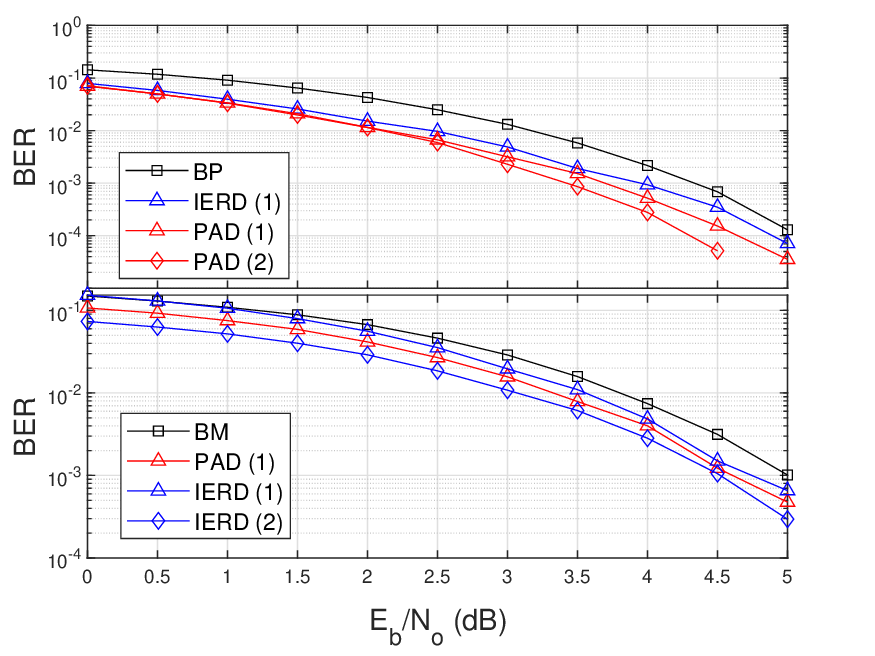} 
\caption{BER performance of $\mathbf{C}(64,22)$-linear block code and $\mathbf{C}(63,30)$-BCH code over AWGN channel.}
\label{fig4}
\end{figure}
\begin{figure}[!t]
\centering
\includegraphics[width=3.0in]{} 
\caption{BLER performance of $\mathbf{C}(64,22)$-linear block code over AWGN channel.}
\label{fig5}
\end{figure}
\subsection{Error Rate Analysis}
Let the error vector be \( wt(\boldsymbol{f}) = \tau \), for all \( \boldsymbol{v} \in \mathbf{C}^{\perp} \) with \( wt(\boldsymbol{v}) = \delta \). A total of \( |\boldsymbol{d}| \) distinct dual codewords with different code weights are selected, and let \( \boldsymbol{cw} \) denote the number of dual codewords selected for each weight. Then, the probability of \( W T(\boldsymbol{\tau})=k\) can be given as
\begin{align}
& Pr(W T(\boldsymbol{\tau})=k) \notag\\
&= \sum_{ \sum_{i=1}^{|\boldsymbol{d}|} k_i = k}\prod_{i=1}^{|\boldsymbol{d}|}  \binom{\boldsymbol{\boldsymbol{cw}}(i)}{k_i} p_{\tau}(i)^{k_i} (1-p_{\tau}(i))^{\boldsymbol{\boldsymbol{cw}}(i)-k_i}.
\end{align}
where $p_{\tau}(i)=W(d(i),\tau)$. 

In IERD scheme, after flipping an error position and a non-error position once, the weights of the resulting new error vectors $\mathbf{f}_1$ and $\mathbf{f}_2$ are $\tau-1$ and $\tau+1$, respectively. The probability that the intrinsic information obtained from $\mathbf{f}_1$ is less than that obtained from $\mathbf{f}_2$ is the probability of finding and reversing the error position. Specifically, it becomes
\begin{align}
&Pr(W T({\tau+1})>W T({\tau-1})) \notag\\
% &= \sum_{k=1}^{\sum(\boldsymbol{cw})}Pr(W T({\tau+1})=k,W T({\tau-1})<k) \notag\\
&= \sum_{k=1}^{\sum(\boldsymbol{cw})}Pr(W T({\tau+1})=k)Pr(W T({\tau-1})<k).
\end{align}

\begin{figure*}
\begin{align}
\label{eq:Pr}
Pr({\tau_{sucess}})
&= Pr(wt(\boldsymbol{f})=\tau)Pr(W T({\tau+1})>W T({\tau-1}))Pr({{(\tau-1)}_{sucess}}) \notag\\
&= \binom{n}{\tau} p^\tau(1-p)^{n-\tau} \sum_{k=1}^{\sum(\boldsymbol{cw})}Pr(W T({\tau+1})=k)Pr(W T({\tau-1})<k)Pr({{(\tau-1)}_{sucess}}).
\end{align}
\end{figure*}

Under BSC, the probability that $wt(\boldsymbol{f})=\tau$ can be expressed as $Pr(wt(\boldsymbol{f})=\tau) = \binom{n}{\tau} p^\tau(1-p)^{n-\tau}$. When \( wt(\boldsymbol{f}) = \tau \), the probability of successful decoding \( Pr(\tau_{\text{success}}) \) is given by (\ref{eq:Pr}).
% \begin{figure*}
% \begin{align}
% \label{eq:Pr}
% Pr({\tau_{sucess}})
% &= Pr(wt(\boldsymbol{f})=\tau)Pr(W T(\boldsymbol{\tau+1})>W T(\boldsymbol{\tau-1}))Pr({{(\tau-1)}_{sucess}}) \notag\\
% &= \binom{n}{\tau} p^\tau(1-p)^{n-\tau} \sum_{k=1}^{\sum(\boldsymbol{cw})}Pr(W T(\boldsymbol{\tau+1})=k)Pr(W T(\boldsymbol{\tau-1})<k)Pr({{(\tau-1)}_{sucess}}).
% \end{align}
% \end{figure*}
Then, the word error rate (WER) for a BSC with error probability $p$ can be calculated as $\operatorname{WER}(p)=1-\sum_{\tau=1}^n Pr({\tau_{sucess}})$.

% \begin{equation}
% \label{eq:WER}
% \operatorname{WER}(p)=1-\sum_{\tau=1}^n Pr({\tau_{sucess}}).
% \end{equation}
% 差部分内容的补充。
% Example 1 (comparison of calculated and simulated WER): For the linear block code $\mathbf{C}(64,32)$, simulation experiments were conducted using 100 dual codes from sets $\mathbf{A}$ and $\mathbf{B}$. The performance of the error reduced decoding algorithm over a BSC channel was simulated, and the results are shown in Fig. \ref{fig3}. For each transition probability $p$, the theoretical performance of the algorithm was calculated using Eq. (\ref{eq:WER}). It can be observed that the theoretical WER outperforms the simulated WER. This discrepancy arises because, in practical decoding, the algorithm may misinterpret a codeword as another codeword in the codebook when the weight of the error vector is relatively large, thereby affecting the decoding results.
% \begin{figure}[h]
% \centering
% \includegraphics[width=3.2in]{WERSandC.eps}
% \caption{WER of Error Reduction Decoding Algorithms.}
% \label{fig4}
% \end{figure}

\section{Numerical Results}

The numerical results are started by generating the linear block code and obtaining the corresponding sets \( \mathbf{A} \) and \( \mathbf{B} \). The iteration counts for both the IERD and PAD schemes are set to $T_{\max} = 15$. The results are then simulated over BSC and AWGN, as shown in Fig. \ref{fig:main}(\subref{fig3}) and Fig. \ref{fig4}.  We select \( |\mathbf{A}| + |\mathbf{B}| \) dual codewords to compute the intrinsic information and compare performance with BP and the Berlekamp–Massey (BM) schemes. \(|\mathbf{A}| + |\mathbf{B}| \in \{1000, 5000\}\), corresponding to IERD (1) and IERD (2), as well as PAD (1) and PAD (2) in Fig. \ref{fig4}.   Fig. \ref{fig4}, Fig. \ref{fig6} and Fig. \ref{fig:main}(\subref{fig3}) illustrates that the proposed IERD scheme, which relies solely on intrinsic information, outperforms  BP, MSD and BM schemes slightly in terms of error reduction. In Fig. \ref{fig4}, we employ a randomly generated systematic linear block code without regard to whether the cycles in the factor graph for BP decoding are even-length, nor do we enforce sparsity in the parity-check matrix. Moreover, the proposed decoding algorithm is better suited to short codes, hence the superior performance. Furthermore, the decoding performance of the PAD scheme exceeds that of IERD scheme. Besides, IERD scheme can correct one error per iteration. Therefore, once $T_{\max} > \frac{d(\mathbf{C})-1}{2}$, increasing the number of iterations $T_{\max}$ will not significantly improve the decoding performance. 

The ability of hyper reliability performance of our proposal with linear block code is afterward verified. As shown in Fig. \ref{fig5} and Fig. \ref{fig7} , both the proposed IERD and PAD can meet the hyper-reliability requirements of 6G xURLLC.   For each iteration of the PAD scheme, the required computation involves \( ((|\mathbf{A}| + |\mathbf{B}|)n + 1)n \) finite-field multiplications, resulting in an asymptotic complexity of $\mathcal{O}(((|\mathbf{A}| + |\mathbf{B}|)n + 1)n)$. With a maximum of $T_{\max}$ iterations, the worst-case complexity becomes $\mathcal{O}(T_{\max}((|\mathbf{A}| + |\mathbf{B}|)n + 1)n)$. In contrast, each iteration of IERD scheme requires \( (|\mathbf{A}| + |\mathbf{B}|)n^2 \) finite-field multiplications, resulting in an asymptotic complexity of $\mathcal{O}((|\mathbf{A}| + |\mathbf{B}|)n^2)$. Despite this, PAD scheme benefits from fewer iterations compared to IERD scheme.  As shown in Fig. \ref{fig7}, we compare the average latency of the proposed decoding scheme with other decoding algorithms when successfully transmitting 10,000 blocks of 32 bits. This also indirectly confirms that our algorithm is low-complexity and low-latency. The proposed decoding scheme involves only vector–matrix multiplications and element-wise comparisons throughout the entire decoding process, making them easy to implement in hardware. %The decoding schemes proposed in this paper has lower complexity and higher reliability, thus making it more suitable for 6G's xURLLC application scenarios.

\begin{figure}[!t]
\centering
\includegraphics[width=3.0in]{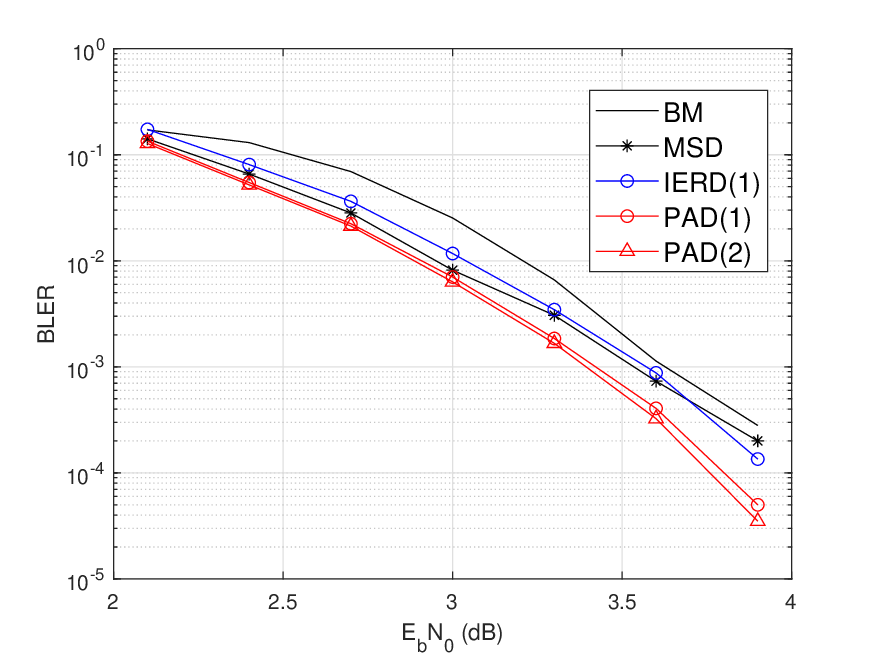} 
\caption{ BLER performance of LDPC (N = 256) and BCH (N = 255) codes over AWGN channel.}
\label{fig6}
\end{figure}
\begin{figure}[!t]
\centering
\includegraphics[width=3.0in]{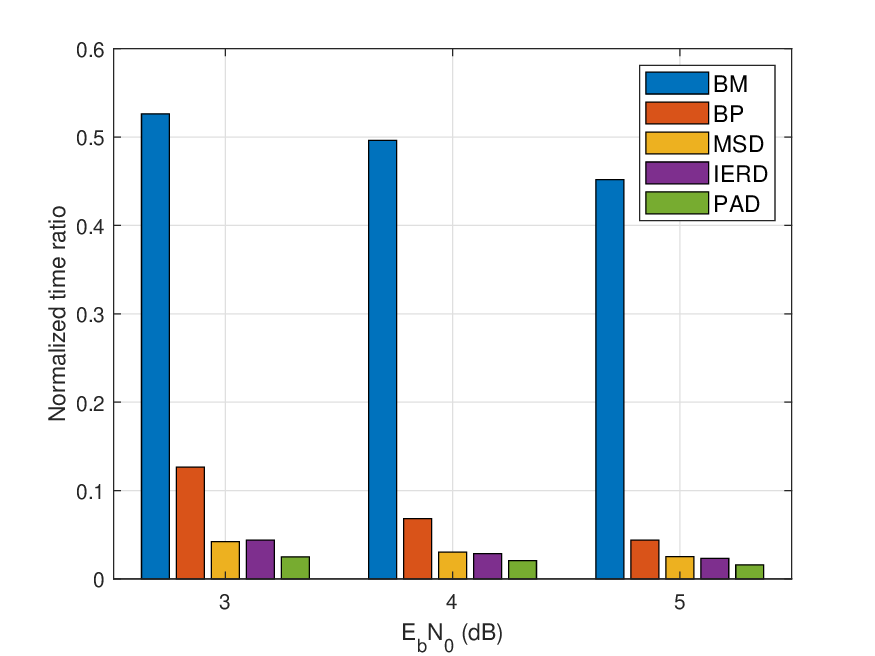} 
\caption{ Time comparison of BM, BP, MSD, IERD and PAD schemes, given 10000 blocks of 32-bit data.}
\label{fig7}
\end{figure}

\section{Conclusion}
In this work, we have studied the linear block code decoding with the aid of dual codewords within the framework of linear vector operations. Theoretical analysis has shown that dual codewords with arbitrary weights can be leveraged to extract the essential information required for decoding.  Numerical results demonstrate that the proposed method in this article can achieve hyper reliability and lower complexity performances, which makes it an ideal solution to the hyper reliability requirement of 6G xURLLC. 

%Building upon this key observation, two schemes have been developed which demonstrate satisfy performance as well as reduced computational complexity. In the future, we plan to further explore the robustness of these decoding method under various channel conditions and investigate how to integrate them into a broader range of communication systems to achieve more reliable communication.

\appendices
\section{Proof of Theorem \ref{thm1}}
In this section, we prove that \( W(\delta, \tau) \) is monotonically increasing with respect to \( \tau \) when \( 2\tau +2 +(\sqrt{\tau}+1)(\delta-3) \leq N \). Define
\begin{align}
M(\delta, \tau) = \sum_{\substack{k=1 \\ k \text{ odd}}}^{\tau} \binom{\tau}{k} \binom{N-\tau}{\delta-k}.
\end{align}
Thus, we express \( W(\delta, \tau) \) as $W(\delta, \tau) = \frac{M(\delta, \tau)}{\binom{N}{\delta}}$.

To establish the monotonicity of \( W(\delta, \tau) \) with respect to \( \tau \), it suffices to show that $M(\delta, \tau + 1) > M(\delta, \tau)$, which is equivalent to proving
\begin{align}
&\Delta M = M(\delta, \tau+1) - M(\delta, \tau) \notag\\
%& = \sum_{\substack{k=1 \\ k \text{ odd}}}^{\tau+1} \left[ \binom{\tau}{k} + \binom{\tau}{k-1} \right] \binom{N-\tau-1}{\delta-k} - \sum_{\substack{k=1 \\ k \text{ odd}}}^{\tau} \binom{\tau}{k} \binom{N-\tau}{\delta-k} \notag\\
 & = \sum_{\substack{k=0 \\ k \text{ even}}}^{\tau} \binom{\tau}{k} \binom{N-\tau-1}{\delta-k-1}-\sum_{\substack{k=1 \\ k \text{ odd}}}^{\tau} \binom{\tau}{k} \binom{N-\tau-1}{\delta-k-1} \notag\\
& =\sum_{\substack{k=0}}^{\tau} (-1)^k\binom{\tau}{k} \binom{N-\tau-1}{\delta-k-1}> 0.
\end{align}
Define the generating function $ G(x) = (1-x)^\tau (1+x)^{N-\tau-1},$ then $\Delta M$ corresponds to the coefficient of $x^{\delta-1}$ in $G(x)$. By factorizing $G(x)$, we obtain $G(x) = (1-x)^\tau (1+x)^{N-\tau-1} = (1-x^2)^\tau (1+x)^{N-2\tau-1}$. Consequently, the coefficient of $x^{\delta-1}$ in $G(x)$ can be expressed as
\begin{align}
\Delta M &= \sum_{k=0}^{\tau} (-1)^k \binom{\tau}{k} \binom{N-\tau-1}{\delta-k-1} \notag\\
&= \sum_{m=0}^{\lfloor (\delta-1)/2 \rfloor} (-1)^m \binom{\tau}{m} \binom{N-2\tau-1}{\delta-1-2m}.
\end{align}

For $\delta - 1 - 2m \leq \delta - 1 \leq \frac{N-2\tau-1}{2} \quad (\text{when } \tau > 1)$, the binomial coefficient $\binom{N-2\tau-1}{\delta-1-2m}$ decreases as $m$ increases. Define the absolute value of the $m$-th term in $\Delta M$ as $a_m = \binom{\tau}{m} \binom{N-2\tau-1}{\delta-1-2m}$. Thus, to establish that $\Delta M > 0$, it suffices to show that %$a_m \geq a_{m+1}$. 
\begin{align}
\label{eq:m1}
&\frac{a_{m}}{a_{m+1}} \notag= \frac{\binom{\tau}{m} \binom{N - 2\tau -1}{\delta -1 -2m}}{\binom{\tau}{m+1} \binom{N - 2\tau -1}{\delta -1 -2(m+1)}}\notag\\
&=\frac{(m+1)(N - 2\tau - \delta + 2m + 2)(N - 2\tau - \delta + 2m + 1)}{(\tau - m)(\delta -1 -2m)(\delta -2 -2m)}\\
&>\frac{a_{0}}{a_{1}}=\frac{(N - 2\tau - \delta  + 2)(N - 2\tau - \delta+ 1)}{(\tau - 1)(\delta -3)(\delta -4)}\notag \\
&>\frac{(N - 2\tau - \delta+ 1)^2}{\tau(\delta-3)^2}\geq 1  \notag\\
\label{eq:m2}
&\Leftrightarrow (N - 2\tau - \delta+ 1)^2 - \tau(\delta-3)^2 \geq 0. 
\end{align}

By solving the quadratic inequality (\ref{eq:m2}) for \(N\), we establish that under the condition \(N \geq 2\tau +2 +(\sqrt{\tau}+1)(\delta-3)\), the inequality \((N - 2\tau - \delta + 1)^2 - \tau(\delta-3)^2 \geq 0\) holds true.
% Under the condition $2\sqrt{\tau}(\delta-3) < N$, we can derive the inequality $\tau (\delta - 1)^2 \leq (\frac{N}{4}-\frac{\delta-1}{2}) (\delta - 1)^2 \leq \frac{1}{108} N^3.$
Thus, when $2\tau +2 +(\sqrt{\tau}+1)(\delta-3) \leq N$, we obtain
$\frac{a_m}{a_{m+1}} >\frac{(N - 2\tau - \delta+ 1)^2}{\tau(\delta-3)^2}\geq 1$, which implies $\Delta M > 0$. This confirms that $W(\delta, \tau)$ is monotonically increasing with respect to $\tau$ under the constraints $2\tau +2 +(\sqrt{\tau}+1)(\delta-3) \leq N$. 

% % 插入新增加的参考文献
% \makeatletter
% \let\myorg@bibitem\bibitem
% \def\bibitem#1#2\par{%
% 	\@ifundefined{bibitem@#1}{%
% 		\myorg@bibitem{#1}#2\par
% 	}{%
% 		\begingroup
% 		\color{\csname bibitem@#1\endcsname}%
% 		\myorg@bibitem{#1}#2\par
% 		\endgroup
% 	}%
% }
% % \newcommand*{\bibitem@polyanskiyinformation}{blue}
% \newcommand*{\bibitem@AsifRIS}{blue}
% \newcommand*{\bibitem@ZainAliNOMA}{blue} 
% \newcommand*{\bibitem@YangPLDPC}{blue} 
% \makeatother 
% % 插入新增加的参考文献

\begin{bibliographystyle}{IEEEtran}
\begin{bibliography}{ref}
\end{bibliography}
\end{bibliographystyle} 


% Generated by IEEEtran.bst, version: 1.14 (2015/08/26)
\begin{thebibliography}{10}
\providecommand{\url}[1]{#1}
\csname url@samestyle\endcsname
\providecommand{\newblock}{\relax}
\providecommand{\bibinfo}[2]{#2}
\providecommand{\BIBentrySTDinterwordspacing}{\spaceskip=0pt\relax}
\providecommand{\BIBentryALTinterwordstretchfactor}{4}
\providecommand{\BIBentryALTinterwordspacing}{\spaceskip=\fontdimen2\font plus
\BIBentryALTinterwordstretchfactor\fontdimen3\font minus
  \fontdimen4\font\relax}
\providecommand{\BIBforeignlanguage}[2]{{%
\expandafter\ifx\csname l@#1\endcsname\relax
\typeout{** WARNING: IEEEtran.bst: No hyphenation pattern has been}%
\typeout{** loaded for the language `#1'. Using the pattern for}%
\typeout{** the default language instead.}%
\else
\language=\csname l@#1\endcsname
\fi
#2}}
\providecommand{\BIBdecl}{\relax}
\BIBdecl

\bibitem{Lulu}
L.~Song, D.~Zhang, S.~Jia, P.~Zhu, and Y.~Li, ``{STAR-RIS}-aided {NOMA} for
  secured {xURLLC},'' \emph{IEEE Trans. Veh. Technol.}, vol.~74, no.~8, pp.
  13\,249--13\,254, Aug. 2025.

\bibitem{AsifRIS}
M.~Asif, X.~Bao, A.~Ranjha, M.~Ahmed, W.~U. Khan, S.~Rani, and X.~Li,
  ``Leveraging ris in consumer-centric 6g networks: Efficient resource
  allocation in rsma-based swipt systems under hardware impairments,''
  \emph{IEEE Trans. Consum. Electron}, vol.~71, no.~2, pp. 4235--4247,
  May.2025.

\bibitem{ZainAliNOMA}
Z.~Ali, M.~Asif, W.~U. Khan, A.~Elfikky, A.~Ihsan, M.~Ahmed, A.~Ranjha, and
  G.~Srivastava, ``Hybrid optimization for noma-based transmissive-ris mounted
  uav networks,'' \emph{IEEE Trans. Consum. Electron}, vol.~71, no.~2, pp.
  3740--3752, May.2025.

\bibitem{6Gmiao}
S.~Miao, C.~Kestel, L.~Johannsen, M.~Geiselhart, L.~Schmalen,
  A.~Balatsoukas-Stimming, G.~Liva, N.~Wehn, and S.~T. Brink, ``Trends in
  channel coding for 6g,'' \emph{Proceedings of the IEEE}, vol. 112, no.~7, pp.
  653--675, July.2024.

\bibitem{Jingjing}
J.~Guo, D.~Zhang, I.~Lee, Y.~Li, and M.~Shirvanimoghaddam, ``Partitioned analog
  fountain codes for short packet communications,'' \emph{IEEE Commun. Lett.},
  vol.~28, no.~6, pp. 1248--1252, June.2024.

\bibitem{Maxiao}
J.~Liang, Y.~Wang, S.~Cai, and X.~Ma, ``A low-complexity ordered statistic
  decoding of short block codes,'' \emph{IEEE Commun. Lett.}, vol.~27, no.~2,
  pp. 400--403, Feb.2023.

\bibitem{YangPLDPC}
Z.~Yang, Y.~Li, Y.~L. Guan, and Y.~Fang, ``Source-constrained hierarchical
  modulation systems with protograph ldpc codes: A promising transceiver design
  for future 6g-enabled iot,'' \emph{IEEE J. Sel. Areas Commun}, vol.~43,
  no.~4, pp. 1103--1117, Jan.2025.

\bibitem{richardson2008modern}
T.~Richardson and R.~Urbanke, \emph{Modern coding theory}.\hskip 1em plus 0.5em
  minus 0.4em\relax Cambridge, U.K.: Cambridge University Press, Mar. 2008.

\bibitem{gallager1962low}
R.~Gallager, ``Low-density parity-check codes,'' \emph{IEEE Trans. Inf.
  Theory}, vol.~8, no.~1, pp. 21--28, Jan. 1962.

\bibitem{BP_VTC}
R.~Umar, A.~Quddus, and Y.~Ma, ``Systematic turbo-polar, turbo-{LDPC}-polar and
  turbo-{LDPC } codes based on belief propagation decoding,'' in \emph{IEEE
  VTC'2024-Spring}, Sep. 2024, pp. 1--7.

\bibitem{Zhu_TWC}
Y.~Huang, Y.~Jiang, F.~Zheng, P.~Zhu, and T.~Q.~S. Quek, ``Effective energy
  efficiency of cell-free {mMIMO} systems for {URLLC} with probabilistic delay
  bounds and finite blocklength communications,'' \emph{IEEE Trans. Wireless
  Commun.}, vol.~24, no.~3, pp. 2279--2296, Mar. 2025.

\bibitem{fossorier1999reduced}
M.~Fossorier, M.~Mihaljevic, and H.~Imai, ``Reduced complexity iterative
  decoding of low-density parity check codes based on belief propagation,''
  \emph{IEEE Trans. Commun}, vol.~47, no.~5, pp. 673--680, May. 1999.

\bibitem{9306872}
L.~Deng, Z.~Liu, Y.~Guan, X.~Liu, C.~Aslam, X.~Yu, and Z.~Shi, ``Perturbed
  adaptive belief propagation decoding for high-density parity-check codes,''
  \emph{IEEE Trans. Commun.}, vol.~69, no.~4, pp. 2065--2079, Apr. 2021.

\bibitem{Jia_TWC}
S.~Jia, R.~Wang, Y.~Lou, N.~Wang, D.~Zhang, K.~Singh, and S.~Mumtaz, ``Secrecy
  performance analysis of {UAV}-assisted ambient backscatter communications
  with jamming,'' \emph{IEEE Trans. Wireless Commun.}, vol.~23, no.~12, pp.
  18\,111--18\,125, Dec. 2024.

\bibitem{Mingxiao}
M.~Sun, D.~Zhang, S.~Jia, and A.~Li, ``Packet managemet of {AoI} in the finite
  block-length regime,'' in \emph{IEEE ICCT'2024}, Sep. 2024, pp. 748--752.

\bibitem{bossert1986hard}
M.~Bossert and F.~Hergert, ``Hard- and soft- decision decoding beyond the half
  minimum distance---an algorithm for linear codes,'' \emph{IEEE Trans. Inf.
  Theory}, vol.~32, no.~5, pp. 709--714, Sep.1986.

\bibitem{bossert2022hard}
M.~Bossert, R.~Schulz, and S.~Bitzer, ``On hard and soft decision decoding of
  {BCH} codes,'' \emph{IEEE Trans. Inf. Theory}, vol.~68, no.~11, pp.
  7107--7124, Nov. 2022.

\bibitem{yuan2021plausibility}
J.~Yuan, J.~Xing, and L.~Chen, ``Plausibility analysis of shift-sum decoding
  for cyclic codes,'' in \emph{IEEE ISIT'2021}, Jul. 2021, pp. 652--657.

\bibitem{xing2023shift}
J.~Xing, M.~Bossert, L.~Chen, J.~Yuan, and S.~Bitzer, ``Shift-sum decoding of
  non-binary cyclic codes,'' \emph{IEEE Trans. Inf. Theory}, vol.~70, no.~2,
  pp. 980--994, Feb. 2024.

\bibitem{ai_tvt}
Y.~Cheng, W.~Chen, T.~Hou, G.~Y. Li, and B.~Ai, ``Learning rate-compatible
  linear block codes: An auto-encoder based approach,'' \emph{IEEE Trans. Veh.
  Technol.}, vol.~74, no.~4, pp. 6745--6749, April. 2025.

\end{thebibliography}
\end{document}